\documentclass[prl, twocolumn, showpacs, superscriptaddress, amsmath,amssymb, floatfix, eqsecnum,nofootinbib]{revtex4-1}
\pdfoutput=1
\usepackage{amsmath}
\usepackage{amssymb}
\usepackage{amsthm}
\usepackage{amsfonts}
\usepackage{bbm}
\usepackage{comment}
\usepackage[normalem]{ulem}
\usepackage{graphicx}
\usepackage[dvipsnames]{xcolor}
\usepackage{color,framed}
\usepackage{hyperref}
\usepackage{times}
\usepackage{enumerate}
\usepackage{lipsum}
\usepackage{slashed}
\usepackage{url}
\usepackage{bbm}
\usepackage{chngcntr}
\counterwithout{equation}{section}

\hypersetup{
    colorlinks=true, 
    linktoc=all,     
    linkcolor=blue,  
}

\def \beq {\begin{equation}}
\def \eeq {\end{equation}}
\def \beqa {\begin{eqnarray}}
\def \eeqa {\end{eqnarray}}
\def \bseq {\begin{subequations}}
\def \eseq {\end{subequations}}

\newcommand{\<}{\langle}
\renewcommand{\>}{\rangle}

\newcommand{\ket}[1]{|#1\rangle}

\usepackage{mdframed}
\newtheorem{thm}{Theorem}

\newtheorem{lemma}{Lemma}
\newtheorem{prop}{Proposition}

\theoremstyle{definition}
\newtheorem{defin}{Definition}

\def\eea{\end{eqnarray}}

\def\Tr{ {\rm Tr} }
\def\<{\langle}
\def\>{\rangle}

\def\bZ{\mathbb{Z}}
\def\bR{\mathbb{R}}

\def\[#1\]{%
  \begin{equation}\begin{gathered}#1\end{gathered}\end{equation}%
}

\begin{document}

\title{Translation symmetry-enforced long-range entanglement in mixed states}
\author{Ryan Thorngren}
\affiliation{Mani L. Bhaumik Institute for Theoretical Physics, Department of Physics and Astronomy,
University of California, Los Angeles, CA 90095, USA}
\author{Lei Gioia}
\affiliation{Walter Burke Institute for Theoretical Physics, Caltech, Pasadena, CA, USA}
\affiliation{Department of Physics, Caltech, Pasadena, CA, USA}
\author{Carolyn Zhang}
\affiliation{Department of Physics, Harvard University, Cambridge, MA 02138, USA}

\begin{abstract}
We show by a counting argument that even though translation symmetry admits symmetric short-range entangled (SRE) eigenstates, there are not enough such SRE eigenstates to span the zero momentum sector. This means that the fixed point strong-to-weak spontaneous symmetry breaking state of translation symmetry is long-range entangled: it cannot be written as a mixture of SRE states. This is a subtle form of long-range entanglement in mixed states that cannot be detected by long-range connected correlation functions.
\end{abstract}

\maketitle

\emph{Introduction}---Recent work on open quantum systems has highlighted the rich landscape of different patterns of entanglement unique to mixed states. We say a mixed state is long-range entangled (LRE) if it is not approximately a mixture of easy-to-prepare (i.e. ``short-rangle entangled'') states~\cite{hastingsfinite,tarunseparability,tarunseparability2}\footnote{Note that under this definition, states like $\frac{1}{2}\left(|\{Z_r=+1\}\rangle\langle\{Z_r=+1\}|+|\{Z_r=-1\}\rangle\langle\{Z_r=-1\}|+\right)$ is considered SRE, even though it has classical correlations precluding an SRE purification. This definition of SRE focuses on quantum entanglement. Also note that such states are sometimes called ``separable'', even though in the quantum information literature ``separable'' is reserved for states that satisfy the stronger condition that they can be written as mixtures of \emph{product states}.}.
It is natural to ask: what are general classes of LRE mixed states? What are the mechanisms that lead to this entanglement? And how can long range entanglement be detected?

Symmetries have served as an important tool for exploring the above questions~\cite{deGroot2022symmetryprotected,lessaenforced,sala2024,lessamultipartite,lessa2025higher,salastrong,ziereis2025strongtoweaksymmetrybreakingphases,PRXQuantum.6.010347}. In particular, Ref.~\onlinecite{lessamultipartite} showed that strong \emph{anomalous} symmetries preclude a state from being multipartite separable. If a mixed state has a strong $G$ symmetry, meaning $U_g\rho\propto\rho,\,\forall g\in G$, this implies that every state $|\psi_i\rangle$ in its decomposition must be LRE. This led to the construction of mixed states that, unlike any pure state, are tripartite entangled but bipartite disentangled.

On the other hand, on-site symmetries of the form $U_g=\otimes_rU_{g,r}$, where each $U_{g,r}$ acts on a single site, clearly admit completely disentangled symmetric product states. In fact, for abelian $G$, their entire symmetry sectors can be spanned by symmetric product states: $\rho\propto P_{\pi(g)}=\sum_i|\psi_i\rangle\langle\psi_i|$ where $\{|\psi_i\rangle\}$ are product states and $\pi(g)$ labels an irreducible representation of $G$. $P_{\pi(g)}$ is also known as the maximally mixed state in a symmetry sector (MMIS), and gives the fixed point representative of a strong-to-weak spontaneous symmetry breaking (SWSSB) phase~\cite{ma2025,lessa2025swssb,ziereis2025strongtoweaksymmetrybreakingphases,lessaenforced,weinstein2025}. On the other hand, strong non-abelian finite and continuous symmetries enforce entanglement in the MMIS~\cite{lessaenforced,salastrong}.

Translation symmetry lies somewhere in between on-site symmetries and anomalous symmetries. Translation symmetry has the peculiar property that states in particular symmetry sectors, i.e. with nonzero momentum (transforming as $T|\psi\rangle = e^{ik}|\psi\rangle$ with $k\neq 0$ mod $2\pi$), must be long range entangled~\cite{leimomentum}. In the zero momentum sector, it is straightforward to write down translation invariant product states. However, unlike for on-site symmetries, it may not be possible to span the zero momentum sector with short range entangled zero momentum states.

In the context of open quantum systems, the above question can be rephrased as -- are states that SWSSB translation symmetry LRE? While the results of Ref.~\onlinecite{lessamultipartite} rule out an SRE SWSSB state for anomalous symmetries, it does not rule this out for symmetries like translation and the recently introduced non-onsiteable anomaly-free finite internal symmetries~\cite{wilbur,tuanomalies}.

In this work, we show that while translation symmetry is anomaly-free and admits symmetric product states, its zero momentum sector cannot be spanned by symmetric SRE states. Therefore, the projector onto the zero momentum sector, which demonstrates SWSSB of translation symmetry, is LRE. This long range entanglement is rather subtle because it is not detected by long range connected correlation functions~\cite{supp}. The mismatch between the space of zero momentum states and the space spanned by SRE zero momentum states gives a new mechanism for enforcing long-range entanglement. 

It is worth mentioning that from the perspective of parent channels for a mixed quantum state, which describe dynamics steering arbitrary states to the mixed state of interest, strong translation symmetry is very unusual. A quantum channel $\mathcal{E}[\rho]=\sum_iK_i \rho K_i^\dagger$ is strongly symmetric if $[U_g,K_i]=0$ for all $g,i$. This is impossible for translation symmetry unless $K_i$ has support over the entire system. Indeed, we will give an example where translation SWSSB states show up naturally as steady states of quantum channels, and these quantum channels are nonlocal.

Throughout, we work in 1d with periodic boundary conditions, on a chain of length $n$ with local Hilbert space/physical dimensions $q$. However, our results generalize to higher dimensions (see \emph{Discussion}).

\emph{Main Theorems}---We study the maximally mixed state of translation invariant states, given by
    \begin{align}
        \rho_{\rm TI}= \frac{1}{D_{\rm TI}}\sum_j\ket{\Psi_j}\langle\Psi_j| \,,
    \end{align}
where $j$ sums over an orthonormal basis of translation-invariant (i.e. zero momentum) states $\ket{\Psi_j}$ and $D_{\rm TI}$ is the dimension of the space of these states. We will prove $\rho_{\rm TI}$ is long-range entangled in the sense of the following theorem.

\begin{mdframed}[backgroundcolor=green!10]
\begin{thm}\label{thm1}
    The maximally mixed state of translation invariant states $\rho_{\rm TI}$ is not well-approximated by any mixture of easily-preparable pure states.
    
    In particular, if $\sigma_
    \tau$ is a mixture of pure states each preparable by time-$\tau$ evolution from a product state by a range-$r$ Hamiltonian (not necessarily all the same), and $\sigma_\tau$ well-approximates $\rho_{\rm TI}$ in the sense that
    \[D(\rho_{\rm TI},\sigma_\tau)\le 1 - \eta\,,\]
    where $D$ is trace distance and $\eta > 1/\text{poly}(n)$, then 
    \[\tau =\Omega(\frac{\sqrt{n}}{\text{polylog}(n)})\,,\]
    meaning $\tau$ is lower bounded by a function which is asymptotic to $\sqrt{n}/\text{polylog(n)}$.
\end{thm}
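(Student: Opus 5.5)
The plan is a dimension-counting argument. Write $P$ for the projector onto the zero-momentum sector, so $\rho_{\rm TI}=P/D_{\rm TI}$ with $D_{\rm TI}=\frac{1}{n}\sum_{d|n}\varphi(d)q^{n/d}\simeq q^n/n$. Let $\sigma_\tau=\sum_i p_i\ket{\psi_i}\langle\psi_i|$.

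\emph{Step 1: turn the trace-distance hypothesis into a dimension condition.} I would first show that closeness forces $\sigma_\tau$ to spread weight over a large part of the zero-momentum sector. For any projector $\Pi$ of rank $m$, every eigenvalue of $\rho_{\rm TI}$ is $1/D_{\rm TI}$, which gives $\Tr(\Pi\rho_{\rm TI})\le m/D_{\rm TI}$. Take the Helstrom projector $\Pi_+$ onto the positive part of $\rho_{\rm TI}-\sigma_\tau$; then $D=\Tr\Pi_+(\rho_{\rm TI}-\sigma_\tau)$. From this I expect to get the following consequence. If all $\ket{\psi_i}$ with non-negligible $P$-weight lie within $\epsilon$ of a subspace $V$ of dimension $N$, then
\[1-D(\rho_{\rm TI},\sigma_\tau)\le N/D_{\rm TI}+O(\epsilon)\,.\]
Hence $\eta>1/\mathrm{poly}(n)$ would force $N\gtrsim q^n/\mathrm{poly}(n)$. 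Components $\ket{\psi_i}$ with $\langle\psi_i|P|\psi_i\rangle\le\eta/2$ contribute at most $\eta/2$ to the overlap, so I would discard them.

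\emph{Step 2: structure of SRE states with large zero-momentum weight.} For a depth-$\tau$ state I would write
\[\langle\psi|P|\psi\rangle=\frac{1}{n}\sum_{k}\langle\psi|T^k|\psi\rangle\,.\]
Since $\ket{\psi}=U\ket{\text{prod}}$ with $U$ having Lieb--Robinson light cone of width $\ell\sim r\tau+\mathrm{polylog}(n)$, each overlap $\langle\psi|T^k|\psi\rangle$ approximately factorizes into a product of about $n/\ell$ block overlaps. Consequently it can be $\gtrsim\eta/\mathrm{poly}(n)$ only if $\psi$ is locally $k$-periodic on all but $O(\log n)$ blocks. This is the mechanism behind the result of Ref.~\onlinecite{leimomentum}. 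I would use it to show that every $\psi_i$ with $P$-weight $\ge\eta/2$ is $\epsilon$-close to a state specified by a small amount of local data. That data consists of a periodic unit (a depth-$\tau$ local circuit on a window), the positions of $O(\log n)$ defects, and local circuits around those defects.

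\emph{Step 3: an $\epsilon$-net and the count.} Discretize the local gates to precision $\epsilon/\mathrm{poly}(n)$. The number of such descriptions is then $N=\exp\!\big(O(\ell^2\,\mathrm{polylog}(n))\big)$. Here $\ell^2$ counts gates in an $\ell\times\tau$ spacetime window times the cost of describing each gate, and the polylog accounts for the defect data and the net precision. Let $V$ be the span of the corresponding states, so $\dim V\le N$. Step 1 then requires $\ell^2\,\mathrm{polylog}(n)\gtrsim n\log q$. With $\ell\sim r\tau+\mathrm{polylog}(n)$ this gives $\tau=\Omega(\sqrt n/\mathrm{polylog}(n))$.

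\emph{Main obstacle.} The hardest step is Step 2. I must show rigorously that a depth-$\tau$ state with only inverse-polynomial zero-momentum weight is described by $O(\tau^2\,\mathrm{polylog}\,n)$ parameters, rather than by data whose size scales with the period $k$, which can be as large as $n$. My first attempt would be to use the factorization of $\langle\psi|T^k|\psi\rangle$ over light-cone blocks, together with a Markov-type bound on how many blocks can have small local fidelity. Periods $k$ longer than $\ell$ then have to be handled by showing that they cost exponentially small weight unless the state is genuinely shorter-period up to the allowed defects.
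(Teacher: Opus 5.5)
The main gap is your Step 2: as formulated it is false in the regime of the theorem, not just unproven. Separately, the statement itself fails for $\eta\sim 1/n$.

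\textbf{Step 1 and the comparison with the paper.} Your Step 1 is correct, and more careful than the paper. Testing with $\Pi=P-P_{PV}$, where $P_{PV}$ projects onto $PV$, gives $1-D(\rho_{\rm TI},\sigma_\tau)\le \dim V/D_{\rm TI}+\eta/2+\epsilon^2$ with no assumption on the components; you do not need the Helstrom projector.

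The paper takes a different route. After the Tails Lemma it bounds only the span of \emph{exactly} translation-invariant depth-$d$ states. It uses a cutting trick: conjugation by $C_x=T^xC_0T^{-x}$ splits $\ket{\psi}$ into blocks of length $2d+1$, which relies on $T\ket{\psi}=\ket{\psi}$. It then counts polynomials in $O(d^2q^4)$ gate entries algebraically, not with an $\epsilon$-net.

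There is a mismatch in how the paper uses $P_d$. Its identity $\Tr P_d\rho_{\rm TI}=D_{\rm SRE}/D_{\rm TI}$ needs $P_d$ to project onto the span of translation-invariant depth-$d$ states, since the span of all depth-$d$ states is the whole Hilbert space. The Tails Lemma, however, needs every component of $\sigma_\tau$ to lie near the range of $P_d$. So the paper effectively treats the components as translation invariant, and nothing in it plays the role of your Step 2.

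\textbf{Long periods are not exponentially suppressed.} A computational-basis string with primitive period $g\,|\,n$ has $\langle\psi|P|\psi\rangle=1/g$ exactly. So for $\ell\ll g\le 2/\eta$ these states survive your cut, yet they have no ``window of length $\ell$ plus $O(\log n)$ defects'' description. If you put period-$g$ units ($\sim g\tau$ gates) into your net, you get $N=\exp(O(g\tau\,\mathrm{polylog}\,n))$. Step 1 then yields only $\tau\gtrsim \eta n/\mathrm{polylog}(n)$, which falls short of $\sqrt n$ once $\eta\ll n^{-1/2}$.

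\textbf{Local near-periodicity does not glue into a global unit.} Take $\ket{\psi}=\bigotimes_j(\cos\theta_j\ket{0}+\sin\theta_j\ket{1})$ with $\theta_j=\frac{\pi}{4}+\frac14\sin\frac{2\pi j}{n}$. Then $\langle\psi|T^k|\psi\rangle\approx e^{-ck^2/n}$, so the zero-momentum weight is $\Theta(n^{-1/2})$. Every block is $1$-periodic up to infidelity $O(\ell/n^2)$, so your Markov bound is satisfied. But the unit drifts by $O(1)$ around the ring, and the overlap with any $T^g$-invariant state with $g\le\sqrt n$ is $O(n^{-1/4})$.

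\textbf{The statement fails for small inverse-polynomial $\eta$.} The state $\sigma=\mathbf{1}/q^n$ is a mixture of product states with $D(\rho_{\rm TI},\sigma)=1-D_{\rm TI}/q^n\le 1-1/n$. So $\eta=1/n$ is reached at $\tau=0$. Your Step 2 also breaks there: the $k=0$ term alone gives $\langle\psi|P|\psi\rangle\ge 1/n$ for every basis state, so nothing useful is discarded.

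A repaired argument needs three things: a stronger lower bound on $\eta$; a description of high-weight components that allows drift and long periods; and a count of that family sharper than a naive $\epsilon$-net.
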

\vspace{2mm}
\end{mdframed}
The theorem means that to sample states from an ensemble with even very mild fidelity with $\rho_\text{TI}$ requires a time growing quickly with the system size. The $\sqrt{n}$ lower bound essentially comes from the volume of a light-cone for time-$\tau$ evolution being size $\tau^2$.

The first ingredient in the proof is a lemma which is generally useful when proving long-range-entanglement. It says that it is actually enough to show $\rho_\text{TI}$ is not well approximated by ``depth-$d$ states'' for large enough $d$. These are pure states created by applying a depth-$d$ circuit to a product state. This lemma based on the fact that time-$\tau$ evolution under a bounded geometrically local time-dependent Hamiltonian is well approximated by a depth-$d$ circuit for $d \le d(\tau,n,\epsilon,r)$, a universal function depending only on the number of sites $n$, the time $\tau$ of evolution, and the range $r$ of the Hamiltonian \cite{Haah_2021}. See Theorem \ref{approximationthm} in the Supplemental Material for the precise statement and the proof of the following.

\begin{mdframed}[backgroundcolor=green!10]
\begin{lemma}\label{lemmatails} 
\textup{\bf{(Tails Lemma)}}
There is a function $d(\tau,n,\epsilon,r)=O(\tau\ \text{polylog}(n\tau/\epsilon))$ such that if $P_d$ is the projection onto the space of depth-$d$ states, $d \ge d(\tau,n,\epsilon,r)$, ${\rm Tr }P_d \rho \le \delta$, and $\sigma_\tau $ is any mixture of time-$\tau$ states (as in Theorem \ref{thm1}), then
\[D(\rho,\sigma_\tau) \ge 1-\epsilon^2-\delta\,.\]
\end{lemma}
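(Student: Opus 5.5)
The plan is to invoke the circuit approximation result (Theorem~\ref{approximationthm} in the Supplemental Material, based on \cite{Haah_2021}). It gives the following: for $d \ge d(\tau,n,\epsilon,r)=O(\tau\,\mathrm{polylog}(n\tau/\epsilon))$, every time-$\tau$ evolution $U_\tau$ generated by a range-$r$ Hamiltonian satisfies $\|U_\tau - V\|_{\rm op}\le\epsilon$ for some depth-$d$ circuit $V$. Write $\sigma_\tau=\sum_k p_k |\phi_k\rangle\langle\phi_k|$ with $|\phi_k\rangle = U_{\tau,k}|\text{prod}_k\rangle$. For each $k$ I then get a depth-$d$ state $|\tilde\phi_k\rangle = V_k|\text{prod}_k\rangle$ with $\||\phi_k\rangle-|\tilde\phi_k\rangle\|\le\epsilon$. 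By definition $|\tilde\phi_k\rangle$ lies in the range of $P_d$, the projector onto the span of depth-$d$ states.

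The key estimate bounds the weight of each $|\phi_k\rangle$ outside the depth-$d$ subspace. Since $P_d|\tilde\phi_k\rangle = |\tilde\phi_k\rangle$,
\[\|(1-P_d)|\phi_k\rangle\| = \|(1-P_d)(|\phi_k\rangle-|\tilde\phi_k\rangle)\|\le \epsilon,\]
and therefore $\langle\phi_k|P_d|\phi_k\rangle \ge 1-\epsilon^2$. Averaging over $k$ gives $\Tr P_d\sigma_\tau \ge 1-\epsilon^2$.

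Next I use the variational characterization of trace distance, $D(\rho,\sigma)=\max_{0\le P\le 1}\Tr P(\sigma-\rho)$, with the test operator $P=P_d$. This gives
\[D(\rho,\sigma_\tau)\ge \Tr P_d\sigma_\tau - \Tr P_d\rho \ge 1-\epsilon^2-\delta,\]
which is the claim. If $d$ is larger than $d(\tau,n,\epsilon,r)$, I pad the circuit with identity layers, so depth-$d(\tau,\dots)$ states are also depth-$d$ states. The range of $P_d$ then only grows, and the bound still holds.

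The main obstacle is the approximation theorem itself. It requires a Lieb--Robinson/Trotter-type decomposition of the time-dependent, geometrically local evolution into a depth-$O(\tau\,\mathrm{polylog})$ circuit of local gates with operator-norm error $\epsilon$. I take that from the Supplemental Material and \cite{Haah_2021} rather than reproving it. One bookkeeping point needs care: the error bound must be uniform over all Hamiltonians with bounded local terms of range $r$, so that a single $d$ works for every component of the mixture. The other steps are elementary.
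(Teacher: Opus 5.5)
Your proposal is correct and follows essentially the same route as the paper's proof. The paper also uses Theorem~\ref{approximationthm} to get $\Tr P_d\sigma_i \ge 1-\epsilon^2$ for each pure component, though it goes through the fidelity bound rather than your operator-norm estimate (the two are interchangeable here); it then averages by linearity and uses the trace distance's characterization as a maximum over measurements (Theorem 9.1 of Nielsen--Chuang), with $P_d$ as the test projector.
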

\end{mdframed}
Intuitively, this means that if there is a small probability of measuring $P_d$ on $\rho$ and obtaining 1 (meaning that the resulting state is in the span of depth-$d$ states) then $\rho$ is far away from any mixture of time-$\tau$ states.

The second ingredient, and the main technical result of the paper, is an upper bound on the dimension of the span of all translation-invariant depth-$d$ states. This upper bound is based on the idea that a depth-$d$ circuit has $\approx n^d$ parameters, while the dimension of all translation-invariant states is exponential in $n$. However, the argument is not quite trivial, since the depth-$d$ states are a non-linear function of the circuit parameters. Indeed, one can imagine a 1-dimensional curve whose values span all of any finite dimensional vector space, such as the curve $f(t) = (1,t,t^2,\ldots, t^N)$. We thus need to have control over how the coefficient functions can vary as a function of the parameters.

As a warm-up, we consider translation-invariant matrix product states (TIMPS).

\begin{mdframed}[backgroundcolor=green!10]
\begin{prop}\label{propMPScounting}\textup{\bf{(MPS Counting)}}
    The dimension $D_{\rm TIMPS}(n,q,d)$ of the span of all TIMPS on a ring of size $n$, with bond dimension $d$, and physical dimension $q$, is bounded above by $D_{hpoly}(n,qd^2)$, which is the dimension of the space of degree $n$ homogeneous polynomials in the $qd^2$ variables. In particular,
    \begin{align}
        \begin{split}
           D_{\rm MPS}(n,q,d) \le D_{hpoly}(n,qd^2)={qd^2-1+n \choose n} \,.
        \end{split}
    \end{align}
\end{prop}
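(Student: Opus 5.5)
We will express a translation-invariant MPS through its tensor and observe that every amplitude is a homogeneous degree-$n$ polynomial in the tensor entries. A TIMPS on a ring of length $n$ with bond dimension $d$ and physical dimension $q$ is specified by $q$ matrices $A^{s}\in \mathbb{C}^{d\times d}$, $s=1,\dots,q$, i.e.\ by $qd^2$ complex parameters $a^{s}_{\alpha\beta}$. The state is
\[
\ket{\psi(A)}=\sum_{s_1,\dots,s_n}\Tr\left(A^{s_1}A^{s_2}\cdots A^{s_n}\right)\ket{s_1 s_2\cdots s_n}\,.
\]
Expanding the trace, each coefficient $\Tr(A^{s_1}\cdots A^{s_n})=\sum_{\alpha_1,\dots,\alpha_n} a^{s_1}_{\alpha_1\alpha_2}a^{s_2}_{\alpha_2\alpha_3}\cdots a^{s_n}_{\alpha_n\alpha_1}$ is a homogeneous polynomial of degree exactly $n$ in the $qd^2$ variables $a^s_{\alpha\beta}$.

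Next we linearize the map $A\mapsto\ket{\psi(A)}$. Let $V_n$ be the space of homogeneous degree-$n$ polynomials in $qd^2$ variables; its monomial basis $\{m_\mu\}$ has $\binom{qd^2-1+n}{n}$ elements by stars and bars. Write each coefficient as $\Tr(A^{s_1}\cdots A^{s_n})=\sum_\mu c_{\mu}(s_1,\dots,s_n)\,m_\mu(A)$ with constant coefficients $c_\mu$ independent of $A$. Then
\[
\ket{\psi(A)}=\sum_\mu m_\mu(A)\,\ket{v_\mu}\,,\qquad \ket{v_\mu}=\sum_{s_1,\dots,s_n}c_\mu(s_1,\dots,s_n)\ket{s_1\cdots s_n}\,,
\]
so every TIMPS lies in the fixed subspace $\mathrm{span}\{\ket{v_\mu}\}$, which does not depend on $A$. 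This subspace has dimension at most $|\{\mu\}|=D_{hpoly}(n,qd^2)$. Hence the span of all TIMPS also has at most this dimension. Equivalently, the linear map $V_n^*\to(\mathbb{C}^q)^{\otimes n}$ sending evaluation at $A$ to $\ket{\psi(A)}$ factors through the finite-dimensional space $V_n^*$, so its image has dimension at most $\dim V_n$.

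The only delicate point is that the decomposition must use coefficients $c_\mu$ that are independent of the parameters. The curve example $f(t)=(1,t,\dots,t^N)$ shows that this is exactly what fails for general nonlinear parametrizations. Here it holds because the amplitudes are polynomials of bounded degree $n$, so the whole family lies in the linear image of the finite monomial basis. If non-translation-invariant normalizations or boundary vectors were allowed, the degree bookkeeping would change. For the standard TIMPS with periodic trace, however, homogeneity of degree $n$ is immediate.
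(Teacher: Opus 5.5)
Your proof is correct and is the paper's argument: each amplitude $\Tr(A^{s_1}\cdots A^{s_n})$ lies in the $\binom{qd^2-1+n}{n}$-dimensional space of degree-$n$ homogeneous polynomials in the entries of $A$, so every TIMPS lies in a fixed subspace of at most that dimension. The paper phrases this dually, as at least $q^n - D_{hpoly}(n,qd^2)$ parameter-independent linear relations satisfied by every TIMPS, whereas you exhibit the spanning vectors $\ket{v_\mu}$ directly, which also covers the case $q^n \le D_{hpoly}(n,qd^2)$ without separate comment.
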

\end{mdframed}
\begin{proof}
    We write the MPS tensor $A$ in terms of $d \times d$ matrices $A=\{A^i\}_{i=1}^q$ where $q$ is the physical dimension. The (unnormalized) MPS state on $n$ $q$bits is
    \[\ket{A}=\sum_{i_1, \ldots, i_n} ({\rm Tr} A^{i_1} \cdots A^{i_n}) \ket{i_1,\ldots, i_n}\,,\]
    where the $A$ at each site is the same, guaranteeing translation symmetry.
    We observe that the coefficients of this vector are degree $n$ homogeneous polynomials in the $q d^2$ coefficients $x_1,\cdots x_{qd^2}$ of the tensor $A$ : they are linear combinations of monomials
    \begin{equation}
        x_1^{\alpha_1}x_2^{\alpha_2}\cdots x_{qd^2}^{\alpha_{qd^2}}
    \end{equation}
    with $\alpha_1+\alpha_2+\cdots\alpha_{qd^2}=n$. The number of these monomials is
    \begin{align}
        \begin{split}
            D_{hpoly}(n,qd^2)&={qd^2+n-1 \choose n} \\
            &= \frac{1}{n!} (qd^2+n-1) \cdots (qd^2).\label{Dhpoly}
        \end{split}
    \end{align}
    There are $q^n$ components in the vector $\ket{A}$, so if $q^n > D_{hpoly}(n,qd^2)$, then these component coefficients satisfy
    \[q^n - D_{hpoly}(n,qd^2)\,,\]
    linear relations \emph{as functions of the coefficients of A}. Therefore, if we plug in any values for these coefficients, obtaining an MPS state, these states all satisfy the same $q^n - D_{hpoly}(n,qd^2)$ linear relations. Thus, the proposition follows.
\end{proof}

If we could apply Proposition \ref{propMPScounting} to translation-invariant depth-$d$ states we would be done. Unfortunately, we do not know if such states are always expressible as TIMPS. Indeed, we do not assume that the depth-$d$ circuit itself has any sort of translation-invariance---only that the resulting state does. Moreover, a $\log n$ depth circuit can give rise to logarithmic entanglement entropy, indicating a bond dimension $d=\mathcal{O}(n)$. This would appear sufficient according to Proposition \ref{propMPScounting} to span the space of translation invariant states. Using a cutting argument, we will show below that translation invariance and the circuit structure allow for a formulation of the resulting MPS as a translation-invariant MPS with much smaller bond dimension than the naive counting above.

\begin{mdframed}[backgroundcolor=green!10]
\begin{prop}\label{propSREbound}\textup{\bf{(Circuit Counting)}}
    The dimension $D_{\rm SRE}(n,d,q)$ of the span of translation-invariant depth-$d$ SRE states on a ring of length $n = m(2d+1) + r$, where $1 \le r \le 2d+1$, with local dimension $q$, is bounded above by
    \begin{align}
        \begin{split}
            D_{\rm SRE}(n,d,q)& \le 2d(2d+1) q^4 D_{hpoly}(m,2d(2d+1)q^4)\\
            &= 2d(2d+1) q^4 {2d(2d+1)q^4 -1 + m \choose m}.
        \end{split}
    \end{align}
\end{prop}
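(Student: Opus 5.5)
The plan is to show that every translation-invariant depth-$d$ state lies in one of at most $N := 2d(2d+1)q^4$ fixed linear subspaces $V_c$, each of dimension at most $D_{hpoly}(m,N)$. The span of all such states then has dimension at most $N\cdot D_{hpoly}(m,N)$, which is the claimed bound. As in Proposition \ref{propMPScounting}, each $V_c$ will be obtained by exhibiting the coefficients of the state as homogeneous degree-$m$ polynomials in at most $N$ variables. Those variables are the entries of a single ``supersite'' tensor, repeated $m$ times around the ring. The discrete label $c$ records the finitely many choices that cannot be absorbed into the polynomial structure: where the cut is placed (at most $2d+1$ choices) and which of at most $2d$ boundary configurations is used. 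The factor $q^4$ comes from the operator-Schmidt decomposition of a two-site gate, which has at most $q^4$ terms.

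\emph{Step 1: cutting.} Write $\ket{\psi}=U\ket{0\cdots0}$ with $U$ a depth-$d$ brickwork circuit. Group the ring into $m$ blocks of length $2d+1$ plus one leftover block of length $r$. The choice $2d+1$ is forced by the light cone: a block of that length contains a full backward light cone of width $2d$. I would cut each gate that straddles a block boundary by its operator-Schmidt decomposition, $g=\sum_{\mu=1}^{q^4}a_\mu\otimes b_\mu$. This writes $\ket{\psi}$ as an MPS on the $m+1$ supersites. The supersite tensors $B_j$ have physical dimension $q^{2d+1}$, and their virtual indices label the cut data.

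\emph{Step 2: reducing to one repeated tensor.} This is the crux. The circuit is not translation invariant, so the $B_j$ are all different, and naively the coefficients are multilinear in $m$ distinct tensors rather than polynomial in one. To fix this I would use only the invariance of the state. Since $\ket{\psi}=T^{k}\ket{\psi}$ for every $k$, I can replace the circuit by a ``cut-and-repeat'' construction. Take one block together with its light-cone environment, namely a window of $2d+1$ sites plus the $O(d)$ boundary legs on each side. Let the invariance glue copies of this window to one another, so that the state is reconstructed from a single repeated tensor $A$ and a defect tensor carrying the remainder $r$. For this construction to be legitimate, the local reduced data of a translation-invariant state must be position independent; that is the property I would need to extract.

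The boundary legs entering the repeated tensor should number at most $2d$, one for each layer-boundary crossing, and each carries $q^4$ Schmidt values. I expect this to make the effective number of free entries of $A$ at most $(2d+1)\cdot 2d\cdot q^4=N$, once only the entries that actually appear in contractions are counted. The coefficients of $\ket{\psi}$ are then of the form
\[\sum_{\text{cut data}}\Tr\big(A^{i_1}\cdots A^{i_m}\,R^{i_{\rm rem}}\big).\]
The remainder tensor $R$, together with the position of the cut, is what I would absorb into the discrete label $c$. What must be checked is that, for fixed $c$, $R$ ranges over at most a single fixed choice, or contributes only a linear factor, so that each $V_c$ is exactly the span of degree-$m$ monomials in the entries of $A$.

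\emph{Step 3: the counting argument.} With the coefficients written as homogeneous degree-$m$ polynomials in at most $N$ variables, the argument is identical to that of Proposition \ref{propMPScounting}. Each $V_c$ satisfies $q^n-D_{hpoly}(m,N)$ linear relations that are independent of the circuit parameters, so $\dim V_c\le D_{hpoly}(m,N)$, and summing over the at most $N$ labels gives the bound.

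The main obstacle is Step 2. There I must show that translation invariance of the \emph{output} alone, without any invariance of the circuit, lets the state be represented with one repeated supersite tensor whose size is polynomial in $d$ and $q$ rather than exponential in $d$. I would first try to prove this by direct gauge-fixing of the cut MPS. I would check the resulting count carefully, since the $q^4$ per cut leg and the number of cut legs must combine into the linear factor $N$ rather than into $q^{4d}$.
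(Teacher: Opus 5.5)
\textbf{Step 2 is a genuine gap, and the paper fills it with an idea your sketch does not contain.} Gauge-fixing the site-dependent MPS of Step 1 will not make the $B_j$ coincide. Gauge freedom gives no mechanism for that, and the generic symmetrization of a site-dependent MPS (averaging over translates) multiplies the bond dimension by a factor of order $n$.

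The paper instead moves the symmetry of the output onto a single operator. Let $C_0$ be the sub-circuit of gates of $C$ in the light-cone intersection of sites $0$ and $1$. Then $C_0^{-1}|\psi\rangle$ is unentangled across the bond $(0,1)$. Because $T|\psi\rangle=|\psi\rangle$, the translate $C_x=T^xC_0T^{-x}$ cuts $|\psi\rangle$ at $x+\tfrac12$, even though $C_x$ is not part of $C$. The unitaries $C_{(2d+1)j}$, $0\le j<m$, have disjoint supports. Applying their inverses factorizes the state into translates of one block state $|\psi_{\rm block}\rangle$ times a remainder. Reinserting $\prod_j C_{(2d+1)j}$ gives an MPS whose block tensor is identical on every block, because it is built from the single unitary $C_0$ and the single state $|\psi_{\rm block}\rangle$.

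The parameter count then comes from writing $|\psi_{\rm block}\rangle$ as a depth-$2d$ circuit, i.e.\ $2d(2d+1)$ two-site tensors with $q^4$ entries each. It does not come from counting entries of $A$. The cut positions are fixed at $0,2d+1,\dots$, so nothing is summed over them. The prefactor $2d(2d+1)q^4$ is the paper's count of remainder parameters (using $r\le 2d+1$), not a number of subspaces. Your labels number only about $2d(2d+1)$ in any case, so the extra $q^4$ in your count of subspaces is unexplained.

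\textbf{Step 3 also fails as stated, and the paper's last step has the same flaw.} Each entry of the block tensor is a product of one component from each of the $2d(2d+1)$ small tensors. The coefficients of $|\psi\rangle$ therefore have degree $m$ in each of $2d(2d+1)$ separate groups of $q^4$ variables, not degree $m$ in $N$ variables. The honest monomial count is $\binom{q^4+m-1}{m}^{2d(2d+1)}$. The remainder likewise contributes a product over its tensors, not a linear factor.

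No refinement of the counting can reach $N\binom{N+m-1}{m}$, because that bound is false for some parameters in the stated range. Translation-invariant product states $|\phi\rangle^{\otimes n}$ are depth-$d$ states, and they span the symmetric subspace, of dimension $\binom{n+q-1}{q-1}$. Take $q=10$, $d=100$, $n=402$, so $m=1$ and $r=2d+1$. The symmetric subspace then has dimension $\binom{411}{9}\approx 8\times 10^{17}$, while the claimed bound is $N\binom{N}{1}=N^2\approx 1.6\times 10^{17}$.

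What the cutting construction does support is the multi-homogeneous estimate $\log D_{\rm SRE}=O(d^2q^4\log n)$. This is weaker than the displayed binomial, but it still forces $d\gtrsim\sqrt{n/\log n}$, which is enough for the asymptotic conclusion of Theorem~\ref{thm1}.
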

\end{mdframed}

\begin{proof}
    Let $\ket{\psi}$ be a depth-$d$ state obtained by
    \[\ket{\psi}=C \ket{0}\,,\]
    where $C$ is a depth-$d$ circuit and $\ket{0}$ is a translation-invariant product state. We assume that $\ket{\psi}$ is invariant under the shift operator $T$ but make no other assumption on $C$.
    
    Consider a particular site labeled $0 \in \bZ_n$. We consider the sub-circuit $C_0$ defined by keeping all of the gates of $C$ in the intersection of the light-cones of site 0 and 1. Here is an example, for depth $d = 4$, where the circuit elements of $C_0$ are drawn in blue and the rest of the circuit elements of $C$ are drawn in green:
    \begin{center}
        \includegraphics[width=9cm]{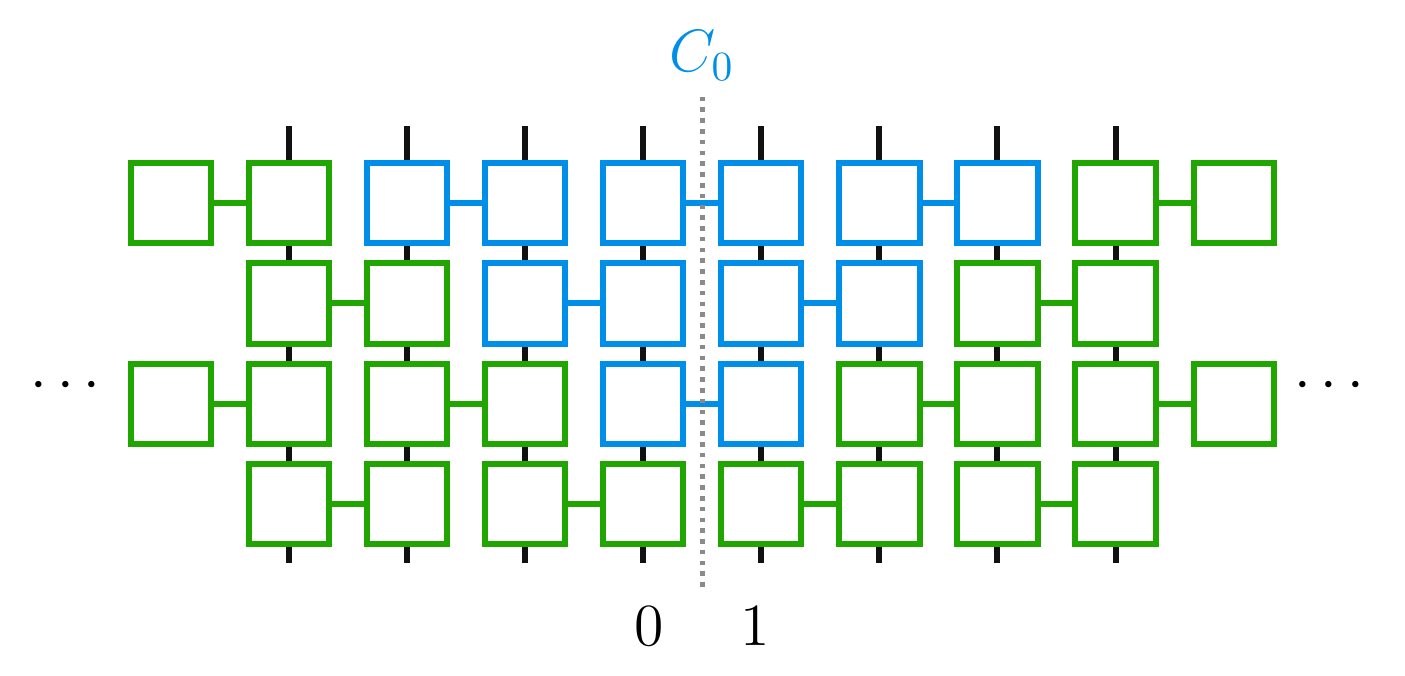}
    \end{center}
    Note, we have drawn the circuit elements with an internal bond representing their Schmidt decompositions. Thinking of these circuits as composed of matrix-product operators (MPOs) will be useful below.
    
    Observe that we can remove $C_0$ from $C$ by considering $|\psi_0\rangle =C_0^{-1}\ket{\psi}$
    which separates the sites 0 and 1. The result is a state which is ``cut'', meaning that if we consider the reduced density matrix $\rho_I$ to an interval $I$ containing 0 and 1, whose complement is at least length $2d+1$, then $\rho_I$ separates into a tensor product of a left interval containing $0$ and a right interval containing $1$. This can be seen by inspecting the circuit presentation of $|\psi_0\rangle$.

    Because $\ket{\psi}$ is translation invariant, the same property holds across a cut at $x+1/2$ for the state $T^x C_0^{-1} T^{-x} \ket{\psi}$
    where $T$ is the translation operator taking site $x$ to site $x+1$. In fact, let
    $C_x := T^x C_0 T^{-x}$ and consider $C_x^{-1} C_y^{-1} \ket{\psi}$ with $|x-y| \ge 2d+1$ (assume $y > x$). Then by the argument above, this state satisfies the cut property at $x+1/2$ and at $y + 1/2$, since the two cutting operators $C_x$ and $C_y$ are not overlapping. As a result, the state splits as a tensor product of a state on the interval $[x+1,y]$ and its complement.
    
    Now we apply this cutting many times. We write
    \[n = (2d+1)m+r\,,\]
    where $1 \le r \le 2d+1$ (this funny convention is to avoid casework below), and consider the state
    \[\label{eqncutstate}\ket{\psi_{\rm cut}}=\prod_{j=0}^{m-1} C_{(2d+1)j}^{-1} \ket{\psi}\,.\]
    By the same argument above,
    \[\label{eqnfactorization}\ket{\psi_{\rm cut}} = \left( \bigotimes_{j=0}^{m-1} T^{(2d+1)j}\ket{\psi_{\rm block}} \right) \otimes \ket{\psi_{\rm remainder}}\,,\]
    where $\ket{\psi_{\rm block}}$ is a state on $[1,2d+1]$ and $\ket{\psi_{\rm remainder}}$ is a state on $[(2d+1)m+1,n]$. We will call these intervals the ``blocks''. Note that when the size of the remainder happens to be $r = 2d+1$ (so $n$ is divisible by $2d+1$), this is a translation-invariant state with a unit cell of size $2d+1$. Otherwise, it is approximately translation-invariant, in that $T^{2d+1}$ takes one block state to the next, except at the end where the remainder is. This follows from the expression \eqref{eqncutstate} and the translation-invariance of $\ket{\psi}$.
    \begin{center}
        \includegraphics[width=6cm]{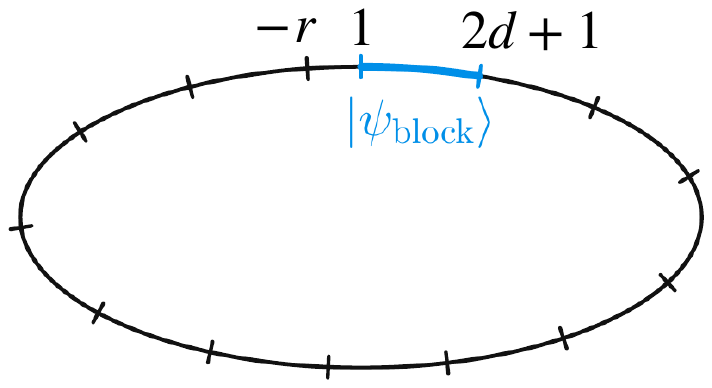}
    \end{center}
    
    Inverting \eqref{eqncutstate} we get
    \[\ket{\psi} = \prod_{j=0}^{m-1} C_{(2d+1)j} \ket{\psi_{\rm cut}}\,.\]
    Thinking of the blocks as large physical sites, we can express this circuit contraction as a matrix product state
    \[\ket{\psi} = \sum_{i_1,\ldots,i_m,j} (\Tr A^{i_1} \cdots A^{i_m} B^j) \ket{i_1,\ldots,i_m,j}\,,\]
    where $i_l$ runs over the basis vectors of the $l$th block $[l+1,l+2d+1]$ and $j$ over the basis of the remainder block. We can already apply the argument of Proposition \ref{propMPScounting} to these states, although this will be very inefficient.
    In particular, the problem arises because although we have broken up the chain into length $2d+1$ blocks in $|\psi_{\mathrm{cut}}\rangle$, each $A^{i_j}$ is a very large matrix with dimension exponential in $d$. This would indicate only a $\text{polylog}(n)$ lower bound on the depth $d$ needed to span the space of translation invariant states in the zero momentum sector. In the following, we will show that because $|\psi_{\mathrm{block}}\rangle$ is not an arbitrary state on $2d+1$ sites, we would actually need $d\sim\sqrt{n}$.

    To do better, note that $\ket{\psi_{\rm block}}$ and $\ket{\psi_{\rm remainder}}$ can each be obtained by applying a depth $2d$ circuit to the product state $\ket{0}$ on an interval containing the block. The depth is $2d$ because we applied the depth-$d$ circuit in \eqref{eqncutstate} to the state $\ket{\psi}$ which already had depth-$d$.
 
    Given such a state, we obtain a Schmidt decomposition by decomposing each circuit element into a two-body MPO with a bond dimension $q^2$:
    \begin{center}
        \includegraphics[width=4cm]{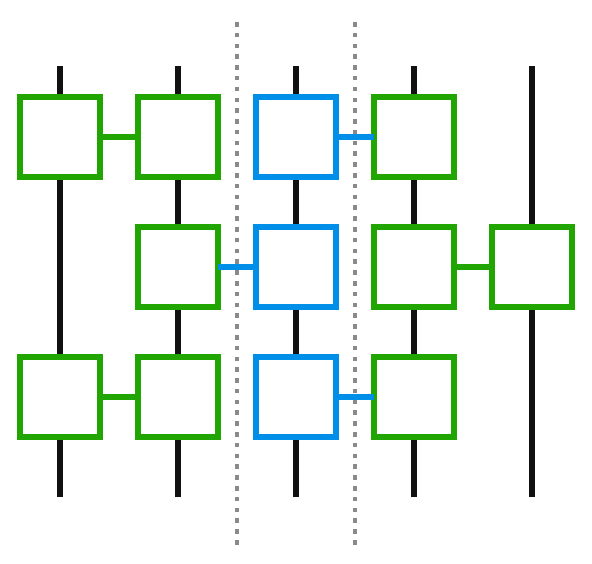}
    \end{center}
    In this decomposition, there are $2d(2d+1)$ of these tensors for $A$ and $2dr$ of these tensors for $B$, each having a total of $q^4$ components. Thus, we can think of $A$ as being composed of the
    \[2d(2d+1)q^4\]
    components of these tensors, and $B$ composed of
    \[2dr q^4\]
    components. As in Proposition \ref{propMPScounting}, the dimension of the span of states $\ket{\psi}$ is given by degree $m+1$ polynomials with $m$ degrees from the $2d(2d+1)q^4$ variables of $A$ and 1 degree from $r 2d q^4$, giving us an upper bound
    \begin{align}
        \begin{split}
            D_{\rm SRE}(n,d,q) &\le r 2d q^4 D_{hpoly}(m,2d(2d+1)q^4) \\
            &= r 2d q^4 {2k(2k+1)q^4 + m-1 \choose 2d(2d+1)q^4-1} \\
            &\le 2d(2d+1)q^4 {2d(2d+1)q^4 + m-1 \choose 2d(2d+1)q^4-1}\,.
        \end{split}
    \end{align}
    In the last line we have removed the dependence on the remainder $r$ using $r \le 2d+1$.
\end{proof}

This upper bound is quite strong. Indeed, the binomial coefficient is a degree $\approx d^2$ polynomial in $n$. If we compare with the dimension $D_{\rm TI}(n,q)$ of all translation-invariant states, $D_{\rm TI}(n,q) \ge \frac1n q^n$ grows exponentially in $n$ (this latter dimension can be computed explicitly as the number of $q$-ary length $n$ necklaces, see the Supplemental Material~\cite{supp}). Intuitively, this means that the projection of $\rho_{\rm TI}$ onto the space of depth-$d$ SRE states will be exponentially small unless $d \approx \sqrt{n}$, leading to the proof of Theorem~\ref{thm1}.

\textit{Proof of the Main Theorem---}By Lemma \ref{lemmatails}, for all $\epsilon > 0$, $d \ge d(\tau,n,\epsilon)$,
    \[D(\rho_\text{TI},\sigma_\tau) \ge 1- \epsilon^2-\text{Tr }P_d \rho_\text{TI}\,.\]
    Combining with the assumption, we get
    \[\text{Tr }P_d \rho_\text{TI} \ge \eta - \epsilon^2\,.\]
    For concreteness specify to $\epsilon = \sqrt{\eta}$ to obtain
    \[\label{eqnaccuracylowerbound}\text{Tr }P_d \rho_\text{TI} \ge \frac{\eta}{2}\,.\]
    The strongest upper bound on the left-hand-side comes from $d = d(\tau,n,\sqrt{\eta})$, so specify to this value as well.

    Now we can use our dimension bounds. Since $\rho_{\rm TI}$ is an equal mixture of all translation-invariant states, we have
    \[{\rm Tr} P_d \rho_{\rm TI} = D_{\rm SRE}(n,d,q)/D_{\rm TI}(n,q).\]
    Using $D_{\rm TI}(n,q) \ge \frac1n q^n$ and Proposition \ref{propSREbound}, we find
    \begin{align}
        \begin{split}
            \text{Tr }P_d \rho_\text{TI} &< 2d(2d+1) n q^{-n+3} e^{2d(2d+1)q^4-1} \\
            &\times\left(1 + \frac{n}{(2d(2d+1)q^4-1)^\gamma}\right)^{2d(2d+1)q^4-1},
        \end{split}
    \end{align}
    for some $q$-dependent constant $1<\gamma<2$ (see Supplemental Material \cite{supp} for more details). Set $a = 2d(2d+1)q^4-1$. Then, together with \eqref{eqnaccuracylowerbound} we have
    \[\label{eqnimplicitlowerbound}\eta q^n < 2 a n e^a (1+n/a^\gamma)^a.\]
    This gives a (rather implicit) lower bound on $d=d(\tau,n,\sqrt{\eta})$, hence a (even more implicit) lower bound on $\tau$. Analyzing it asymptotically with the estimate in Lemma \ref{lemmatails}, we find when $\eta > 1/\text{poly}(n)$,
    \[\tau > \frac{\sqrt{n}}{\text{polylog}(n)}+ \cdots\]
    where $\cdots$ are asymptotically smaller terms. See the Supplemental Material. Unfortunately, to have a better error estimate of the lower bound we would need to better understand the estimate in Lemma \ref{lemmatails}.\qed

\emph{Translation SWSSB via thermalization---}We will now review a natural context in which SWSSB of translation symmetry occurs: thermalizing Floquet dynamics. The spectral form factor (SFF) is a useful diagnostic for probing thermalization in closed quantum systems. For a Floquet system described by a Floquet unitary $W$, the SFF at time $t$ is defined as
\begin{equation}
    K(t)=\overline{|\mathrm{Tr}(W^t)|^2}=\mathrm{tr}(\mathbb{T}^L)\,,
\end{equation}
where $\overline{\cdot}$ indicates a suitable average over similar circuits and $\mathbb{T}$ is a transfer matrix giving evolution in space~\cite{garratt2021,garratt2021feynman}. $\mathbb{T}$ acts in a naturally doubled Hilbert space coming from including both the forward and backward time evolutions $W$ and $W^\dagger$.

For dual-unitary Floquet circuits, $K(t)$ was shown to demonstrate precisely the linear ramp expected from random matrix theory~\cite{bertini2018,bertini2021}. In Ref.~\onlinecite{bertini2021}, it was also made clear that for dual unitary circuits, $\mathbb{T}$ is a nonlocal quantum channel represented on a doubled Hilbert space as an operator: $\mathbb{T}=\mathbb{U}_2\cdot\mathbb{O}_1^\dagger\cdot \mathbb{W}_2\cdot\mathbb{O}_0$, where $\mathbb{U}_2, \mathbb{W}_2$ are unitaries coming from spacetime rotation of the dual-unitary gates, and $\mathbb{O}_0,\mathbb{O}_1$ take the form of vectorized unital, trace-preserving nonlocal channels coming from the averaging. Folding $\mathbb{T}$ back into a channel, \cite{bertini2021} showed that its steady states are spanned by the temporal shift operators~\cite{bertini2018,bertini2021}, giving density matrices corresponding to projectors onto different eigenspaces of the temporal shift operator. This is aligned with the characterization in Refs.~\onlinecite{garratt2021,garratt2021feynman} of thermalization as SSB of $\mathbb{Z}_t\times\mathbb{Z}_t$ down to a diagonal $\mathbb{Z}_t$, or SWSSB of $\mathbb{Z}_t$ time translation symmetry. 

\emph{Discussion}---While we focused on 1d spin chains, it is not hard to see that translation symmetry in higher dimensions also does not admit SRE MMIS. In particular, if we consider a 2d system with translation symmetry in, say, the $y$-direction, then it is equivalent to a 1d spin chain with very large $q\sim\mathcal{O}(\exp(L))$. However, treating $L_y$ as $n$, the same proofs above go through to show that the space of translation invariant SRE states grows too slowly with $L_y$ to span the space of translation invariant states.

The states we discuss in this paper are very specific, but similar results apply to more general states such as those that have SWSSB for translation symmetry together with another discrete symmetry, such as on-site $\mathbb{Z}_2$ (i.e. $\rho\propto \rho_{TI}\cdot \frac{1+\prod_rX_r}{2}$). Our results may also have implications for states with strong non-invertible symmetries that mix with translation symmetry, such as Kramers-Wannier duality~\cite{seiberg2024,inamura2026}. 

It would be interesting to directly relate long range entanglement of the MMIS to obstructions to on-siteability of the symmetry, for finite abelian symmetries~\cite{wilbur,tuanomalies}. By on-siteable, we mean that there exists an on-site representation on ancillas $U_{g,a}$ and a locality preserving unitary on the composite spin-ancilla system $V$ such that $V(U_g\otimes U_{g,a})V^\dagger$ is on-site for all $g\in G$. Clearly, if $U_g$ is on-siteable without the ancillas, then it follows directly from the SRE nature of the MMIS of the on-site symmetry that the MMIS of $U_g$ is also SRE. On the other hand, one might be able to prove, perhaps using a similar counting argument as for translations, that the MMIS of the non-onsiteable 2d $\mathbb{Z}_N$ symmetries of Ref.~\onlinecite{wilbur,tuanomalies} must be LRE. The counting argument would be more subtle because for the fixed point non-onsiteable $\mathbb{Z}_2$ symmetry in Ref.~\onlinecite{wilbur}, it is easy to show that there is an exponential (rather than polynomial) number of SRE states in the even parity sector. One would have to show that this exponential is not fast enough to span the even parity sector. More generally, one might conjecture that a finite abelian symmetry has an SRE MMIS if and only if it is on-siteable.

Another reasonable conjecture related to the result in this paper is that symmetries with infinite range have LRE MMIS. For example, the operator $T\otimes T^{-1}$ implements no net translation, yet also does not admit an SRE MMIS. It would also be interesting to try to prove that $F\cdot T$ for a finite depth circuit $F$ satisfying $(F\cdot T)^n=\mathbf{1}$ also does not admit an SRE MMIS.

A curious property of translation symmetry, as mentioned in the introduction, is that its nonzero momentum sectors do not admit any SRE eigenstates~\cite{leimomentum}. It follows that the projectors onto the nonzero momentum sectors are clearly LRE. It would be interesting to see if this observation can lead to an alternative proof that the projector onto the zero momentum sector is also LRE.

\emph{Acknowledgments}---We thank Jacob Fridolin Steiner and Ruben Verresen for helpful discussions. R.T. acknowledges support from the Mani L. Bhaumik Presidential Term Chair in Theoretical Physics at UCLA. L.G. acknowledges support from the Walter Burke Institute for Theoretical Physics at Caltech and the Caltech Institute for Quantum Information and Matter. C.Z. is supported by the Harvard Society of Fellows. 

\emph{Note added}---While finalizing this manuscript, we became aware of a related work Ref.~\cite{lessa2026mixedstatelongrangeentanglementdimensional}, which will appear in the same arXiv posting.

\bibliography{main}

\onecolumngrid

\section{Supplemental Material}

\subsection*{Hamiltonian Evolution}

The following definitions are adapted from \cite{Haah_2021}, whose techniques we use.
\begin{defin}
    A \textbf{$D$-dimensional lattice $\Lambda$ of $n$ qubits} is a collection of $n$ points in $\bR^D$ all at least separated by a distance 1. In particular, every unit ball has at most one qubit.
\end{defin}

\begin{defin}
    A \textbf{range-$r$, time-$T$ Hamiltonian on $\Lambda$} is an assignment for every subset $X \subset \Lambda$ a Hermitian operator $h_X(t)$, $t \in [0,T]$ acting on the qubits in $X$, with $||h_X(t)|| \le 1$ (in operator norm) for all $X$ and $t$, and $h_X(t) = 0$ if $diam(X) > r$. We also require the Hamiltonian to be piecewise-slowly-varying over $[0,T]$, meaning over a finite set of open intervals $0 < t_1 < \cdots < t_m < 1$, $\frac{d}{dt} h_X(t)$ exists and is bounded above in norm by $1/(t_{k+1}-t_k)$.
\end{defin}

\begin{defin}
    A \textbf{range-$r$, time-$T$ quantum state} is a state on a lattice $\Lambda$ of $n$ qubits obtained by applying Hamiltonian evolution by a $D$-dimensional range-$r$ Hamiltonian over time $[0,T]$ to a product state. In particular, if $T,r\leq \mathcal{O}(\log L)$, then the state is a \textit{short-range entangled} (SRE) state.
\end{defin}

\begin{thm}\label{approximationthm}
(Theorem 1 of Ref.~\onlinecite{Haah_2021}) There is a function $d(T,n,\epsilon,r)$ such that for all range-$r$ time-$T$ Hamiltonians $H$, there is a circuit $C$ of depth at most $d(T,n,\epsilon,r)$ that approximates time evolution $U$ with error $\epsilon$, satisfying
\[\sup_\rho D(U \rho U^\dagger,C \rho C^\dagger ) < \epsilon,\]
where $D(A,B)=\frac12 ||A-B||_1 = \frac12\text{Tr }\sqrt{(A-B)^\dagger (A-B) }$ is the trace distance.\footnote{In fact, \cite{Haah_2021} proves a stronger result including ancillas. Here we set the ancilla Hilbert space to be one dimensional.} In particular, for pure states $\ket{\psi}$,
\[\label{eqnpurestateapprox}|\langle \psi| C^\dagger U \ket{\psi}|^2 > 1- \epsilon^2.\]
Moreover, $d(T,n,\epsilon,r)$ is asymptotically upper bounded as
\[d(T,n,\epsilon,r) = O(T \text{polylog}(nT/\epsilon))\]
\end{thm}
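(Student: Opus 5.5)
The statement is Theorem~\ref{approximationthm}, quoted from Ref.~\onlinecite{Haah_2021}. I would not reprove its main content. Instead I would check that the setting here (the ancilla-free case, with the definitions above) matches that reference, and then derive the pure-state consequence \eqref{eqnpurestateapprox}. I would also sketch why the depth scales as $O(T\,\text{polylog}(nT/\epsilon))$.

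For the depth bound, the plan follows the standard Haah--Hastings--Kothari--Low construction.
\begin{enumerate}
\item Split $[0,T]$ into $O(T)$ unit time slices, so the total error budget per slice is $\epsilon/T$.
\item Within each slice, use Lieb--Robinson bounds to decompose the evolution $U$ on a slice into a product of evolutions on overlapping blocks of size $\ell=O(\text{polylog}(nT/\epsilon))$. The decomposition has the HHKL form $U \approx \prod U_{A\cup B}\, U_B^\dagger\, \prod U_{B\cup C}$, with error $e^{-\Omega(\ell)}$ per boundary. Summing over the $n$ boundaries and $T$ slices gives total error $nT e^{-\Omega(\ell)}\le \epsilon$.
\item Implement each block unitary by a quantum-signal-processing or Trotter circuit on $\ell^D$ qubits of depth $\text{polylog}(nT/\epsilon)$. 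The piecewise-slowly-varying hypothesis is what controls the time-dependent Dyson-series error within a block.
\end{enumerate}
Since the blocks in each layer act in parallel, the total depth is $O(T\,\text{polylog}(nT/\epsilon))$. Because the estimate is in operator norm, it holds uniformly over all input states $\rho$. I expect the main obstacle to be this step: getting the time-dependent simulation and the Lieb--Robinson tails to combine with only polylog overhead. This is exactly the content of Ref.~\onlinecite{Haah_2021}, so I would cite it rather than redo it.

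The remaining step is the pure-state consequence. Take the supremum bound at $\rho=\ket{\psi}\langle\psi|$. Then $U\rho U^\dagger$ and $C\rho C^\dagger$ are pure states $\ket{a}=U\ket\psi$ and $\ket{b}=C\ket\psi$. For pure states the trace distance is $D=\sqrt{1-|\langle b|a\rangle|^2}$. The bound $D<\epsilon$ therefore gives $|\langle\psi|C^\dagger U\ket\psi|^2>1-\epsilon^2$, which is \eqref{eqnpurestateapprox}. This is routine once the operator-norm statement is in hand.
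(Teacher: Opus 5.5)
You follow the same route as the paper: defer the substance to HHKL (Ref.~\cite{Haah_2021}). Your derivation of the pure-state bound from $D=\sqrt{1-|\langle b|a\rangle|^2}$ is correct; the paper leaves that step implicit. What is missing is the one point the paper's proof actually argues, namely \emph{uniformity over Hamiltonians}. The statement asserts a single function $d(T,n,\epsilon,r)$ that works for every range-$r$, time-$T$ Hamiltonian simultaneously. HHKL is formulated for a given $H$, so a priori the constants hidden in $O(T\,\text{polylog}(nT/\epsilon))$ could depend on $H$. Your remark that the operator-norm estimate is uniform over input states $\rho$ concerns a different uniformity and does not address this. It matters downstream: in Lemma~\ref{lemmatailsapp} a single projector $P_d$ must satisfy $\text{Tr}\,P_d\sigma_i\ge 1-\epsilon^2$ for every pure state in the mixture, and these states may come from different Hamiltonians.

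The paper closes this gap with two observations. First, the HHKL construction (the choice of block size and the error estimate) depends on $H$ only through its Lieb--Robinson bound. Second, for interactions with $\|h_X\|\le 1$ and $\mathrm{diam}(X)\le r$ on a lattice of bounded density, the Lieb--Robinson velocity and prefactors are universal, depending only on $r$ and the geometry (e.g.\ Theorem~2.1 of \cite{Nachtergaele_2006}). In your step~2 this amounts to the statement that the constant in $e^{-\Omega(\ell)}$, and hence $\ell$ and the whole circuit architecture, depend only on $(n,T,\epsilon,r)$. You should say this explicitly.

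Separately, your step~3 does not by itself give an ancilla-free circuit of polylog depth. For any fixed order $p$, a product formula needs depth growing like $(nT/\epsilon)^{1/p}$, and quantum signal processing uses ancillas. That part of your argument therefore rests entirely on the citation, just as the paper's footnote does.
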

\begin{proof}
    The theorem is stated given a particular Hamiltonian in \cite{Haah_2021}. For our purposes, we would like to have a uniform upper bound on the depth needed to approximate to within $\epsilon$ by a function $d(T,n,\epsilon,r)$ that applies to all time-$T$ range-$r$ Hamiltonians simultaneously. In fact, the proof of Theorem 1 of \cite{Haah_2021} implies this, since the construction of the approximating circuit and the error upper bound above depends only on the Lieb-Robinson bound for the given Hamiltonian. Restricting ourselves to the class of time-$T$ range-$r$ Hamiltonians, there are universal Lieb-Robinson bounds, depending only on the connectivity and the maximum strength of the terms. See eg. Theorem 2.1 of \cite{Nachtergaele_2006}.
\end{proof}

\begin{lemma}\label{lemmatailsapp}
Let $P_d$ be the projection onto the space of depth-$d$ states, where $d \ge d(T,n,\epsilon,r)$. If
\[\text{Tr }P_d \rho \le \delta,\]
and $\sigma$ is any mixture of time-$T$ SRE states, then
\[D(\rho,\sigma) \ge 1-\epsilon^2-\delta\]
where
\[D(\rho,\sigma) = \frac12 ||\rho -\sigma||_{tr}\]
is the trace distance.
\end{lemma}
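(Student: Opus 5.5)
The plan is to use the variational characterization of trace distance, $D(\rho,\sigma)=\max_{0\le P\le 1}\Tr P(\sigma-\rho)$, with the test operator $P=P_d$. Then
\[D(\rho,\sigma)\ge \Tr P_d\sigma-\Tr P_d\rho\ge \Tr P_d\sigma-\delta,\]
so everything reduces to proving $\Tr P_d\sigma\ge 1-\epsilon^2$. By linearity, if $\sigma=\sum_k p_k\ket{\phi_k}\langle\phi_k|$ with each $\ket{\phi_k}$ a time-$T$ state, it suffices to show $\langle\phi_k|P_d|\phi_k\rangle\ge 1-\epsilon^2$ for every $k$, and then average over $k$ using $\sum_k p_k=1$.

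For a single time-$T$ state, write $\ket{\phi}=U\ket{0_\phi}$, where $U$ is the time-$T$ evolution of some range-$r$ Hamiltonian and $\ket{0_\phi}$ is a product state. Theorem~\ref{approximationthm} supplies a circuit $C$ of depth at most $d(T,n,\epsilon,r)\le d$ with $|\langle 0_\phi|C^\dagger U\ket{0_\phi}|^2>1-\epsilon^2$; this is \eqref{eqnpurestateapprox} applied to the product state. The vector $\ket{\chi}=C\ket{0_\phi}$ is a depth-$d$ state: we may pad $C$ with identity layers to reach depth exactly $d$, or note that the class of depth-$d$ states contains all lower-depth ones. Hence $\ket{\chi}$ lies in the range of $P_d$, and since $P_d\ge\ket{\chi}\langle\chi|$ (both are projections and $\ket{\chi}$ is a unit vector in the range of $P_d$),
\[\langle\phi|P_d|\phi\rangle\ge|\langle\chi|\phi\rangle|^2=|\langle 0_\phi|C^\dagger U\ket{0_\phi}|^2>1-\epsilon^2.\]
Combining this with the first paragraph gives $D(\rho,\sigma)\ge 1-\epsilon^2-\delta$.

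The main point needing care is the uniformity of $d(T,n,\epsilon,r)$: different members of the mixture come from different Hamiltonians, and a single projector $P_d$ must work for all of them. This is exactly what the uniform version of Theorem~\ref{approximationthm} provides, so no further work is needed there. I would also check the convention that $P_d$ projects onto the linear span of depth-$d$ states. That span is a subspace, so $P_d$ is a genuine projection with $0\le P_d\le 1$, which makes it a valid test operator in the trace-distance formula. It also contains every depth-$d$ state starting from any product state, including the $\ket{0_\phi}$ used by each $\ket{\phi_k}$.
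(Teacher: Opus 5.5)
Your proposal is correct and follows essentially the same route as the paper. Both arguments establish $\Tr P_d\sigma\ge 1-\epsilon^2$ pure state by pure state via Theorem~\ref{approximationthm} and linearity, then use $P_d$ as the distinguishing test; the paper cites Theorem 9.1 of Nielsen--Chuang for the measurement $\{P_d,1-P_d\}$, which is equivalent to your bound $D(\rho,\sigma)\ge\Tr P_d(\sigma-\rho)$, and your step $P_d\ge\ket{\chi}\langle\chi|$ makes explicit a detail the paper leaves implicit.
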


\begin{proof}
By Theorem \eqref{approximationthm}, every pure state $\sigma_i$ in $\sigma$ satisfies
\[\text{Tr }P_d \sigma_i \ge 1-\epsilon^2,\]
so also by linearity
\[\text{Tr }P_d \sigma \ge 1-\epsilon^2.\]
On the other hand, if $\rho$ is a state such that
\[\text{Tr }P_d \rho \le \delta,\]
then considering projective measurement of $P_d$, Theorem 9.1 of \cite{nielsen2010quantum} gives
\[D(\rho,\sigma) \ge 1-\epsilon^2-\delta\]    
\end{proof}

Note that in this lemma, we can decrease $\epsilon$ at the cost of increasing the cutoff depth $d(T,n,\epsilon,r)$. This typically causes an increase in $\delta$. Thus, there is in general some optimization to be done to get the best possible lower bound on the trace distance.

\subsection{More details and asymptotics of lower bound in Theorem \ref{thm1}}\label{appasymptoticanalysis}

First we give a derivation of the bound
\begin{align}\label{eqncomplicatedbound}
        \begin{split}
            \text{Tr }P_d \rho_\text{TI} &< 2d(2d+1) n q^{-n+3} e^{2d(2d+1)q^4-1} \\
            &\times\left(1 + \frac{n}{(2d(2d+1)q^4-1)^\gamma}\right)^{2d(2d+1)q^4-1}
        \end{split}
\end{align}
As noted already,
\[{\rm Tr} P_d\rho_{\rm TI} = D_{\rm SRE}(n,d,q)/D_{\rm TI}(n,q).\]
$D_{\rm TI}(n,q)$ is equal to the number of orbits of length-$n$ strings in the alphabet $\{1,\ldots,q\}$ under the cyclic permutation action of $\mathbb{Z}_n$. These orbits are known as ``$q$-ary necklaces of length $n$'' and a standard formula yields
\[D_{\rm TI}(n,q) = \frac1n \sum_{k|n} \varphi(k) q^{n/k}\]
where $\varphi(k)$ is the Euler totient function and the sum is over divisors of $n$. In particular, $\varphi(k) \ge 0$ and $\varphi(1)=1$ so from the $k=1$ term we get
\[D_{\rm TI}(n,q) \ge \frac1n q^n.\]

Now, from Proposition \ref{propSREbound} we have
\[D_{\rm SRE}(n,d,q) \le 2d(2d+1) q^4 {2d(2d+1)q^4 -1 + m \choose m} \\
= 2d(2d+1) q^4 {2d(2d+1)q^4 -1 + m \choose 2d(2d+1)q^4 -1}\]
where we used ${a \choose b} = {a \choose a-b}$. A standard estimate for the binomial coefficient is
\[{a \choose b} \le \left(\frac{ea}{b}\right)^b.\]
This gives
\[D_{\rm SRE}(n,d,q) \le 2d(2d+1) q^4 \left(\frac{2d(2d+1)q^4-1+m}{2d(2d+1)q^4-1} \right)^{2d(2d+1)q^4-1} \\
= 2d(2d+1) q^4 \left(1+\frac{m}{2d(2d+1)q^4-1} \right)^{2d(2d+1)q^4-1}.\]
Observe that this is an increasing function of $m$ and $m \le n/(2d+1)$. Thus
\[D_{\rm SRE}(n,d,q) \le 2d(2d+1) q^4 \left(1+\frac{n}{(2d+1)(2d(2d+1)q^4-1)} \right)^{2d(2d+1)q^4-1}.\]
Finally,
\[(2d+1)(2d(2d+1)q^4-1) \ge (2d(2d+1)q^4-1)^\gamma\]
for some $q$-dependent $\gamma \in (1,2)$. This last inequality is just to simplify some of the expressions. The bound \eqref{eqncomplicatedbound} follows.

We now analyze the $n \to \infty$ asymptotics of this lower bound \eqref{eqncomplicatedbound}. Taking logarithms,
    \[\log \eta + n \log q < \log 2 + \log a + \log n + a(1 +  \log(1+n/a^\gamma))\]
    We will assume that $a = O(n)$, since otherwise we can prove an even stronger bound than we will show. This yields
    \[\log \eta + n \log q < a(1+\log(1+n/a^\gamma)) + O(\log(n)) \\
    < a \log(1+n) + O(\log(n))\]
    Let us suppose
    \[\eta > 1/\text{poly}(n)\]
    Then we have
    \[a >\log q \frac{n}{\log n} + O(1)\]
    Recalling the definition of $a$, we may obtain
    \[(2d+1)^2 >\frac{\log q}{q^4} \frac{n}{\log n}(1  + O(\frac{\log n}{n}))\]
    so
    \[d > \sqrt{\frac{ \log q}{4q^4}}\sqrt{ \frac{n}{\log n}} (1  + O(\frac{\log n}{n}))\]
    On the other hand, by Theorem \ref{approximationthm}, 
    \[d = O(T\text{polylog}(nT/\sqrt{\eta}))\]
    To simplify further, we may assume $T = O(\sqrt{n})$ (otherwise, we have an even stronger lower bound than what we will prove) so $d = O(T \text{polylog}(n))$. Then we have
    \[T > \frac{\sqrt{n}}{\text{polylog}(n)}+ \cdots\]
    where $\cdots$ are asymptotically smaller terms. Unfortunately to have a better error estimate of the lower bound we would need to better understand the estimate in Theorem \ref{approximationthm}.

    \subsection*{Correlation functions}\label{seccorrelations}

Here we will show that $\rho_{\mathrm{TI}}$ has exponentially decaying connected correlation functions $\langle O_iO_j\rangle-\langle O_i\rangle\langle O_j\rangle$ for any local operators $O_i,O_j$. In the case of a global on-site symmetry, the expectation value of any local operator is exactly zero. For translation symmetry, local operators can give nonzero expectation values, but they are exponentially small in $n$. Specifically, we will show that 
\begin{equation}
    \frac{\mathrm{Tr}(OT^r)}{\mathrm{Tr}(\rho_k)}\leq\mathcal{O}(e^{-\mu n})
\end{equation}
for all local operators $O$ and all $r\in[1,n-1]$, for some $\mathcal{O}(1)$ constant $\mu$. Here, $\rho_k$ is the projector onto the momentum $k$ sector. This is true because the above quantity is roughly  $q^{\mathrm{gcd}(n,k)}/q^n $ which is exponentially small in $n$. In particular it is maximized for $k=n/2$, in which case it is $q^{n/2}/q^n=q^{-n/2}$ which is still exponentially small in $n$.

To be more precise, we have that
\begin{equation}
    \mathrm{Tr}(OT^r)=\sum_{\{Z_i\}}\langle \{Z_i\}|O|\{Z_{i+r}\}\rangle 
\end{equation}
Suppose that $O$ is supported on region $A$. Then the product state $|\{Z_i\}\rangle$ can only give a nonzero value above if $Z_i=Z_{i+r}$ for all $i\in\bar{A}$. This tightly constrains the number of possible nonzero values. To proceed further we divide the spin chain into $\mathrm{gcd}(n,r)$ disjoint cycles, and for each cycle $c$ we let $a_c=|A\cap c|$. Then the number of independent qudits left after imposing the constraint on $\bar{A}$ is
\begin{equation}
    d=\sum_{c=1}^{\mathrm{gcd}(n,r)}\max(1,a_c)
\end{equation}
so
\begin{equation}
    \frac{1}{q^n}|\mathrm{Tr}(OT^r)|\leq q^{d-n}
\end{equation}
which is exponentially small for any $O$ supported on $|A|$ that is $\mathcal{O}(1)$. If we allow for nonlocal operators i.e. $|A|=n-1$, then we can get $\mathcal{O}(1)$ contributions (i.e. if $O$ is an operator that shifts all qudits backward except for one, bringing $2\to n, 3\to 2, 4\to 3,\cdots n\to n-1$ we get $\frac{1}{q^n}\mathrm{Tr}(OT)=\frac{1}{q}$). 

Since all of the linear correlation functions are exponentially small, connected correlation functions are also exponentially small. This in fact holds for all $\rho_k$, so not only is the projector onto the zero momentum sector LRE in a way that is undetectable by local correlation functions, but the projectors onto the other momentum sectors are also LRE in a way undetectable by local correlation functions.

\end{document}